\documentclass{article}
\usepackage[utf8]{inputenc}
\usepackage[margin=1in]{geometry}
\usepackage{macros}

\title{Improved Low-Depth Set-Multilinear Circuit Lower Bounds}
\author{Deepanshu Kush \\University of Toronto \\
\href{mailto:deepkush@cs.toronto.edu}{\texttt{deepkush@cs.toronto.edu}}\and Shubhangi Saraf \\University of Toronto\\ \href{mailto:shubhangi.saraf@utoronto.ca}{\texttt{shubhangi.saraf@utoronto.ca}}}

\begin{document}
\maketitle

\begin{abstract}
In this paper, we prove strengthened lower bounds for constant-depth
set-multilinear formulas. More precisely, we show that over any field, there is an explicit polynomial $f$ in VNP defined over $n^2$ variables, and of degree $n$, such that any product-depth $\Delta$ set-multilinear formula computing $f$ has size at least $n^{\Omega \left( n^{1/\Delta}/\Delta\right)  }$.
The hard polynomial $f$ comes from the class of Nisan-Wigderson (NW) design-based polynomials. 

Our lower bounds improve upon the recent work of Limaye, Srinivasan and Tavenas (STOC 2022), where a lower bound of the form $(\log n)^{\Omega (\Delta n^{1/\Delta})}$ was shown for the size of product-depth $\Delta$ set-multilinear formulas computing the iterated matrix multiplication (IMM) polynomial of the same degree and over the same number of variables as $f$. Moreover, our lower bounds are novel for any $\Delta\geq 2$.

The precise quantitative expression in our lower bound is interesting also because the lower bounds we obtain are ``sharp'' in the sense that any asymptotic improvement would imply general set-multilinear circuit lower bounds via depth reduction results. 

In the setting of general set-multilinear formulas, a lower bound of the form $ n^{\Omega(\log n)}$ was already obtained by Raz (J. ACM 2009) for the more general model of multilinear formulas. The techniques of LST (which extend the techniques of the same authors in (FOCS 2021)) give a different route to set-multilinear formula lower bounds, and allow them to obtain  a lower bound of the form $(\log n)^{\Omega(\log n)}$ for the size of general set-multilinear formulas computing the IMM polynomial. Our proof techniques are another variation on those of LST, and enable us to show an improved lower bound (matching that of Raz) of the form $n^{\Omega(\log n)}$, albeit for the same polynomial $f$ in VNP (the NW polynomial).
As observed by LST, if the same $n^{\Omega(\log n)}$ size lower bounds for unbounded-depth set-multilinear formulas could be obtained for the IMM polynomial, then using the self-reducibility of IMM and using hardness escalation results, this would imply super-polynomial lower bounds for general algebraic formulas. 

\end{abstract}

\newpage

\section{Introduction}\label{sec:intro}

\paragraph{Background.}
An \emph{algebraic circuit} over a field $\F$ for a multivariate polynomial $P(x_1,\ldots,x_N)$ is a directed acyclic graph (DAG)
whose internal vertices (called gates) are labeled as either $+$ (sum) or $\times$ (product), and leaves
(vertices of in-degree zero) are labeled by the variables $x_i$ or constants from $\F$. A special output gate (the root of the DAG) represents the polynomial
$P$. If the DAG happens to be a tree, such a resulting circuit is called an \emph{algebraic formula}. 
The size of a circuit is the number of nodes in the DAG. We also consider the product-depth
of the circuit, which is the maximum number of product gates on a root-to-leaf path.

An algebraic circuit is therefore a computational model, which solves the computational
task of evaluating $P$ on a given input $(x_1,\ldots,x_N)$. The complexity of this model is measured
by the size of the circuit, which serves as an indicator of the
time complexity of computing the polynomial. The product-depth measures the degree to which this computation can be made parallel. 
As an algebraic circuit is supposed to construct a formal polynomial $P$, it is a \emph{syntactic} model of
computation. This is unlike a Boolean circuit, which is only required to model specific
truth-table constraints. The problem of proving algebraic circuit lower
bounds is therefore widely considered to be easier than its Boolean counterpart. Indeed, we know that
proving VP $\neq$ VNP, the algebraic analog of the P vs. NP problem, is implied by the latter separation, in the non-uniform setting (\cite{burg}). We refer the reader to \cite{Saptarishi-survey} for a much more elaborate survey of this topic.

\paragraph{The LST breakthrough.}
Much like in the Boolean setting, the problem of showing lower bounds for \emph{general} algebraic circuits (or even formulas) has remained elusive. However, some remarkable progress has been made very recently by Limaye, Srinivasan, and Tavenas (\cite{LST1}) who in a spectacular breakthrough, showed the first super-polynomial lower bounds for algebraic circuits of \emph{all}
constant depths. Prior to their work, the best known lower bound (\cite{KayalST16}) even for product-depth 1 (or $\Sigma\Pi\Sigma$ circuits) was only almost-cubic. This is in stark contrast with the Boolean setting, in which we have known strong constant-depth lower bounds for many decades \cite{Ajtai83, FurstSS84,Yao85,Hastad86,Razborov1987LowerBO,Smolensky87}. Constant-depth circuits are critical to the study of algebraic complexity theory, as unlike the Boolean setting, strong enough bounds against them are known to yield VP $\neq$ VNP (\cite{AgrawalV08}). This helps put into perspective the importance of the work \cite{LST1}.

The crucial step in the proof of their result is to first establish super-polynomial lower bounds for a certain restricted class of (low-depth) algebraic circuits, namely \emph{set-multilinear} circuits which we now define along with other important circuit models. A polynomial is said to be homogeneous
if each monomial has the same total degree and \emph{multilinear} if every variable occurs
at most once in any monomial. Now, suppose that the underlying variable set is partitioned into $d$ sets
$X_1,\ldots, X_d$. Then the polynomial is said to be \emph{set-multilinear} with respect to this variable partition if each
monomial in $P$ has \emph{exactly} one variable from each set. We also define different models
of computation corresponding to these variants of polynomials classes. An algebraic formula (circuit) is set-multilinear with respect to a variable partition
$(X_1,\ldots, X_d)$ if each internal node in the formula (circuit) computes a set-multilinear polynomial. Multilinear/homogeneous circuits and formulas are defined analogously.  

Several well-studied and interesting polynomials happen to be set-multilinear. For example, the Determinant
and the Permanent polynomials, the study of which is profoundly consequential to the field of algebraic complexity theory, are
set-multilinear (with respect to the column variables). Another well-studied polynomial, namely the Iterated
Matrix Multiplication polynomial, is also set-multilinear. The polynomial IMM$_{n,d}$ is defined on $N= dn^2$ variables, where the variables are partitioned into $d$ sets $X_1,\ldots, X_d$ of
size $n^2$, each of which is represented as an $n\times n$ matrix with distinct variable entries. The
polynomial IMM$_{n,d}$ is defined to be the polynomial that is the $(1, 1)$-th entry of the product
matrix $X_1 X_2\cdots X_d$. This polynomial has a simple divide-and-conquer-based set-multilinear formula of size
$n^{O(\log d)}$, and more generally for every $\Delta\leq \log d$, a set-multilinear formula of product-depth $\Delta$ and size $n^{O(\Delta d^{1/\Delta})}$, and circuit\footnote{In this paper, when speaking of constant-depth models of computation at a high level, we shall often use the terms circuit and formula interchangeably as a product-depth $\Delta$ circuit of size $s$ can be simulated by a product-depth $\Delta$ formula of size $s^{2\Delta}$.} of size $n^{O( d^{1/\Delta})}$. Even without the set-multilinearity constraint, no significantly better upper
bound is known. It is reasonable to conjecture that
this simple upper bound is tight up to the constant in the exponent. 

The lower bounds in \cite{LST1}
for general constant-depth algebraic circuits are shown in the following sequence of steps:
\begin{enumerate}
    \item It is shown that general low-depth algebraic circuits can be transformed to set-multilinear algebraic circuits of low depth, and without much of a blow-up in size (as long as the degree is small). More precisely, any product-depth $\Delta$ circuit of size $s$ computing a polynomial that is set-multilinear with respect to the partition $(X_1,\ldots,X_d)$ where each $|X_i|\leq n$, can be converted to a set-multilinear circuit\footnote{There is also an intermediate `homogenization' step which we skip describing here for the sake of brevity.} of product-depth $2\Delta$ and size $\poly(s)\cdot d^{O(d)}$. Such a `set-multilinearization' of general formulas of small degree was already shown before in \cite{Raz-Tensor} (which we describe soon in more detail); however, the main contribution of \cite{LST1} here is to prove this \emph{depth-preserving} version of it.
    \item Strong lower bounds are then established for low-depth set-multilinear circuits (of small enough degree). More precisely, any set-multilinear circuit $C$ computing IMM$_{n,d}$ (where $d = O(\log n)$) of product-depth $\Delta$ must have size at least $n^{d^{\exp(-O(\Delta))}}$. This combined with the first step yields the desired lower bound for general constant-depth circuits.
\end{enumerate}

Given Raz's set-multilinearization of formulas of small degree that we alluded to, and this description of the set-multilinear formula lower bounds from \cite{LST1} when $d = O(\log n)$, it is evident the `small degree' regime is inherently interesting to study - as it provides an avenue, via `hardness escalation', for tackling one of the grand challenges of algebraic complexity theory, namely proving super-polynomial lower bounds for general algebraic formulas. However, we shall now see that even the large degree regime can be equally consequential in this regard.

\paragraph{The large degree regime.}
Consider a polynomial $P$ that is set-multilinear with respect to the variable partition $(X_1,\ldots,X_d)$ where each $|X_i|\leq n$. The main focus of this paper is to study set-multilinear circuit complexity in the regime where $d$ and $n$ are \emph{polynomially} related (as opposed to say, the assumption $d = O(\log n)$ described above). We now provide some background and motivation for studying this regime.

In follow-up work \cite{LST2}, the same authors showed the first super-polynomial
lower bound against unbounded-depth set-multilinear formulas computing IMM$_{n,n}$\footnote{Note that for IMM$_{n,n}$, each $X_i$ has size $n^2$, not $n$. But the important thing for us here is that the degree, $n$, is polynomially related to this parameter.}. As is astutely described in \cite{LST2}, studying the set-multilinear formula complexity of IMM is extremely interesting and consequential even in the setting $d=n$ because of the following reasons:
\begin{itemize}
\item IMM$_{n,n}$ is a \emph{self-reducible} polynomial i.e., it is  possible to construct formulas for IMM$_{n,n}$ by recursively using formulas for IMM$_{n,d}$ (for any
$d<n$). In particular, if we had formulas of size $n^{o(\log d)}$ for IMM$_{n,d}$ (for some $d<n$), this would imply formulas
of size $n^{o(\log n)}$ for IMM$_{n,n}$. In other words, an optimal $n^{\Omega(\log n)}$
lower bound for IMM$_{n,n}$ implies $n^{\omega_d(1)}$ lower bounds for IMM$_{n,d}$ for any $d< n$.
\item Raz in \cite{Raz-Tensor} showed that if an $N$-variate set-multilinear
polynomial of degree $d$ has an algebraic formula of size $s$, then it also has a set-multilinear
formula of size $\poly(s)\cdot (\log s)^{d}$. In particular, for a set-multilinear polynomial $P$ of degree
$d = O(\log N/ \log \log N)$, it follows that $P$ has a formula of size $\poly(N)$ if and only if $P$ has a
set-multilinear formula of size $\poly(N)$. Thus, having $N^{\omega_d(1)}$ set-multilinear
formula size lower bounds for such a low degree would imply super-polynomial lower bounds for general formulas. 
\end{itemize}

In particular, proving the optimal $n^{\Omega(\log n)}$ set-multilinear formula size lower bound for IMM$_{n,n}$ would have dramatic consequences. To this end, the authors in \cite{LST2} are able to show a weaker bound of the form $(\log n)^{\Omega(\log n)}$ instead. Even though it is the case that `simply' improving the base of this exponent from $\log n$ to $n$ yields general formula lower bounds, it seems that we are still far from achieving it. Indeed, as is observed in \cite{LST2}, we do not even have the optimal $n^{\Omega(\sqrt{n})}$ lower bound\footnote{This is known for set-multilinear (and even multilinear) $\Sigma\Pi\Sigma\Pi$ circuits (see \cite{FournierLMS15,Kayal0T18}), but those are only special cases of general product-depth $2$ circuits, which are $\Sigma\Pi\Sigma\Pi\Sigma$.} when product-depth $\Delta = 2$.
Moreover, we do not know how to obtain a lower bound of the form $n^{\Omega(\sqrt{n})}$ for product-depth $2$ set-multilinear circuits for \emph{any} explicit polynomial of degree $n$ and in $\poly(n)$ variables. For product-depths $\Delta\leq \log n$, \cite{LST2} shows a set-multilinear formula size lower bound of $(\log n)^{\Omega(\Delta n^{1/\Delta})}$ for IMM$_{n,n}$, which is in fact the best set-multilinear lower bound we know for any polynomial of degree $n$ and in $\poly(n)$ variables, and for any $\Delta \geq 2$. As far as we know, the previous best lower bound of $\exp(\Omega(n^{1/\Delta}))$, also for IMM$_{n,n}$, followed from the work of Nisan and Wigderson (\cite{NisanW97}). It is therefore an interesting challenge to improve the base of this exponent from $\log n$ to $n$ i.e., establish a near-optimal $n^{\Omega( n^{1/\Delta})}$ lower bound in the constant (or low) depth setting.

\paragraph{Our Results.}
In this paper, we obtain these ``optimal'' lower bounds, albeit not for IMM$_{n,n}$, but rather for another explicit polynomial in VNP. We show the following:

\begin{theorem}\label{thm-intro:main-bd-depth}
Let $N$ be a growing parameter and $\Delta$ be an integer such that $1\leq \Delta \leq \log N/\log \log N$. There is an explicit polynomial $P_N$ defined over $N = n^2$ variables with degree $d = n$ that is set-multilinear with respect to the variable partition $X = (X_1,\ldots, X_d)$ where each $|X_i| = n$ and such that any set-multilinear
formula of product-depth $\Delta$ computing $P_N(X)$ must have size at least $N^{\Omega(d^{1/\Delta}/\Delta)}$.
\end{theorem}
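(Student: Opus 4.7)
The plan is to adapt the framework of \cite{LST1,LST2}, centered on the \emph{relative rank} measure $\mu_w$ of a set-multilinear polynomial with respect to a sign word $w\in\{+1,-1\}^d$. For such $w$, let $I^+_w=\{i:w_i=+1\}$ and $I^-_w=\{i:w_i=-1\}$; for any polynomial $f$ that is set-multilinear on blocks $(X_1,\ldots,X_d)$ with $|X_i|=n$, define $M_w(f)$ to be the coefficient matrix whose rows are indexed by set-multilinear monomials over $\{X_i\}_{i\in I^+_w}$ and whose columns are indexed by those over $\{X_i\}_{i\in I^-_w}$, with the $(\alpha,\beta)$-entry equal to the coefficient of $\alpha\beta$ in $f$. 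Set $\mu_w(f):=\mathrm{rank}(M_w(f))/n^{d/2}$. This measure is subadditive under sums and, for products of set-multilinear polynomials on disjoint blocks $S_f,S_g\subset[d]$, multiplicative: $\mu_w(fg)=\mu_{w|_{S_f}}(f)\cdot\mu_{w|_{S_g}}(g)$.

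Take $P_N$ to be the Nisan--Wigderson design polynomial: identifying $[d]$ with $\mathbb{F}_n$ (for $n$ a prime power), set $P_N=\sum_{p\in\mathbb{F}_n[t],\,\deg p<k}\prod_{i\in\mathbb{F}_n}x_{i,p(i)}$ for a parameter $k$ slightly below $d/2$. Since any two distinct polynomials of degree $<k$ agree on fewer than $k$ points, whenever both $|I^+_w|\ge k$ and $|I^-_w|\ge k$ the matrix $M_w(P_N)$ contains exactly $n^k$ ones placed in distinct rows and distinct columns. Hence $\mathrm{rank}(M_w(P_N))=n^k$, giving $\mu_w(P_N)=n^{k-d/2}$ on the entire near-balanced regime.

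The heart of the proof is to establish a matching upper bound on $\mu_w(F)$ for any set-multilinear formula $F$ of product-depth $\Delta$ and size $s$. Specifically, I would construct, in the style of LST, an adversarial word $w^*=w^*(F)$ lying in the near-balanced regime and satisfying $\mu_{w^*}(F)\le s\cdot n^{-\Omega(d^{1/\Delta}/\Delta)}\cdot n^{k-d/2}$. The construction proceeds level by level on the parse tree of $F$: starting from a perfectly balanced word, one iteratively shifts the balance on the variable blocks feeding the ``heavy'' subproducts at each of the $\Delta$ product-depth layers. Since $\mu_w$ is multiplicative over disjoint products, a balance shift of amount $b$ at one layer multiplies $\mu_{w^*}(F)$ by a further factor of $n^{-\Omega(b)}$; a counting argument on the formula tree (adapted from \cite{LST2}) then shows that a total imbalance of $\Omega(d^{1/\Delta}/\Delta)$ can be accumulated across the $\Delta$ layers within an overall $O(\sqrt{d})$-imbalance budget for $w^*$. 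Comparing the resulting upper and lower bounds on $\mu_{w^*}$ forces $s\ge n^{\Omega(d^{1/\Delta}/\Delta)}$.

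The main obstacle I anticipate is engineering this recursive word construction so that every incremental shift in imbalance genuinely reduces the rank of the relevant subproduct of $F$ (rather than being harmlessly absorbed), while simultaneously keeping $w^*$ near-balanced overall so that the lower bound $\mu_{w^*}(P_N)=n^{k-d/2}$ remains in force. This is precisely where the NW polynomial outperforms IMM in \cite{LST2}: $\mu_w(P_N)$ stays $\Omega(1)$ under $O(\sqrt d)$-imbalance of $w$, whereas $\mu_w(\mathrm{IMM}_{n,n})$ is intrinsically bounded by $n^{-\Omega(d)}$ for every $w$ because the matrix product has rank only $n$. That per-layer factor of $\log n$ savings is exactly what upgrades the base of the lower-bound exponent from $\log n$ in \cite{LST2} to $n$ in our theorem.
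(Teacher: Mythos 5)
Your choice of measure and hard polynomial match the paper's, but the core of your plan---an \emph{adversarially constructed} word $w^*(F)$ with an ``overall $O(\sqrt d)$-imbalance budget''---contains a genuine error that breaks the argument. You assert that $\mu_w(P_N)$ stays $\Omega(1)$ under $O(\sqrt d)$ block-imbalance of $w$. This is false: the measure satisfies the general imbalance bound $\rk_w(f)\le 2^{-|w_{[d]}|/2}$ (Claim~\ref{clm:rk-props}, item 1), so if $\big||I^+_w|-|I^-_w|\big|=b$ then $\mu_w(P_N)\le n^{-b/2}$, which for $b=\Theta(\sqrt d)$ is $n^{-\Theta(\sqrt d)}$, not $\Omega(1)$. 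Any imbalance you deliberately introduce to shrink $\mu_{w^*}(F)$ shrinks $\mu_{w^*}(P_N)$ by the \emph{same} global factor, and the two effects cancel; there is then no contradiction forcing $s$ to be large. The thing that must be made large is not the global imbalance $|w_{[d]}|$ but rather the sum of \emph{sub-block} imbalances $\sum_j |w_{S_j}|$ over the partition induced by a product gate, and the whole point is that this quantity can be $\gg 1$ while $w_{[d]}=0$ exactly.

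This is precisely why the paper does not construct an adversarial word at all. It takes $w\in\{k,-k\}^d$ \emph{uniformly at random}, proves by induction on product-depth (splitting each product gate into a ``heavy factor'' case handled recursively and an ``all light factors'' case handled by an anti-concentration bound, Claim~\ref{clm:sum-lb}) that $\rk_w(C)\le s\cdot 2^{-kd^{1/\Delta}/20}$ holds except with probability $s\cdot d^{-d^{1/\Delta}/(12\Delta)}$, and then \emph{conditions on the event $w_{[d]}=0$}, which has probability $\Theta(1/\sqrt d)\gg s\cdot d^{-d^{1/\Delta}/(12\Delta)}$ when $s$ is below the target bound. The random $w$ also quietly resolves a second gap in your plan: a single deterministic $w^*$ would have to simultaneously defeat all $s$ top-level subformulas, which have incompatible block partitions; the union bound over a random $w$ is what makes a single word work for all of them at once. (Separately, your rank computation for the NW polynomial uses degree $<k$ with $k$ slightly below $d/2$; the paper takes degree $<d/2$ so that $M_w(NW)$ is an exact permutation matrix at $w_{[d]}=0$---this is the natural choice and avoids giving away a needless $n^{k-d/2}$ factor.)
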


Notice that obtaining this precise bound is interesting also when viewed through the lens of \emph{depth reduction}. Tavenas (\cite{Tavenas15}), building on several prior works (\cite{AgrawalV08, Koiran12}), showed that any algebraic circuit of $\poly(N)$ size computing a homogeneous $N$-variate polynomial of degree $d$  can be converted to a homogeneous circuit of product-depth\footnote{The result is stated in \cite{Tavenas15} for $\Sigma\Pi\Sigma\Pi$ circuits but the proof can be appropriately modified for larger product-depths.} $\Delta$ of size $(Nd)^{O(d^{1/\Delta})}$. It easily follows from the proof that this depth
reduction preserves syntactic restrictions. That is, if we start with a syntactically set-multilinear circuit, the resulting product-depth $\Delta$ circuit is also syntactically set-multilinear. Therefore, the precise bound in Theorem \ref{thm-intro:main-bd-depth} is \emph{sharp} in the sense that any asymptotic improvement in its exponent would imply super-polynomial set-multilinear circuit lower bounds, which would be quite a strong and interesting consequence.
Another very intriguing direction is to consider the problem of \emph{improved} depth reduction for set-multilinear circuits. If an asymptotic improvement in the exponent on the bound for general circuits from \cite{Tavenas15} could be shown to hold for set-multilinear circuits in the setting of Theorem \ref{thm-intro:main-bd-depth} (i.e., when $N= d^2$), this would again imply super-polynomial set-multilinear circuit lower bounds. There is some evidence towards this possibility, as \cite{KOS19} shows such an improvement in a certain regime of parameters for multilinear circuits (see the discussion in Section \ref{sec:open} for more details). 

\begin{remark}\label{rem:true-bd}
The lower bound in Theorem \ref{thm-intro:main-bd-depth} is actually $d^{\Omega(d^{1/\Delta}/\Delta)}$, where $d$ is the degree of the underlying polynomial, and it holds as long as degree $d\leq n$ (the details are deferred to the proof of Theorem \ref{thm:main-bd-depth} in Section \ref{sec:main}). Observe that for constant $\Delta$ this bound already nearly matches the bound $(\log n)^{\Omega(\Delta d^{1/\Delta})}$ in \cite{LST2} (which was shown for IMM$_{n,d}$) when $d = (\log n)^{\Omega(1)}$ and exceeds it as soon as $d$ becomes super-polylogarithmic in $n$. Moreover for $d < \log n/ \log \log n$, both the bounds are trivial even for $\Delta =1$.
\end{remark}

We also remark that in several lower bounds for algebraic circuit classes in the past, the lower bound was initially shown for a polynomial in VNP and then with additional effort, was shown to also hold for a polynomial in VP (in particular, the IMM polynomial). A strong candidate for the choice of this polynomial family in VNP has been the Nisan-Wigderson (NW) design-based (\cite{NisanW94-hardness}) family of polynomials. For instance, \cite{KayalSS14} showed a lower bound of $n^{\Omega(\sqrt{n})}$ for the top fan-in of a $\Sigma\Pi^{[O(\sqrt{n})]}\Sigma\Pi^{[\sqrt{n}]}$ circuit computing the NW polynomial, which was subsequently shown for IMM by \cite{FournierLMS15}. Similarly, \cite{KayalLSS17} showed an $n^{\Omega(\sqrt{d})}$ size lower bound for homogeneous depth-4 algebraic
formulas for the NW polynomial, which was then shown for IMM later in \cite{/KumarS17}.
Much like these examples, our hard polynomial family in Theorem \ref{thm-intro:main-bd-depth} is also indeed the NW polynomial family, as we shall see in Section \ref{sec:main}.  Our motivation to study constant-depth set-multilinear formula complexity was to prove the optimal lower bounds for the IMM polynomial. Although we are presently able to show it only for the NW polynomial instead of IMM, we are hopeful that this is an important step in its direction. 

In addition to our lower bound for bounded-depth set-multilinear formulas, we observe that the same proof technique also implies a lower bound of the form $n^{\Omega(\log n)}$ for unbounded-depth set-multilinear formulas. \cite{LST2} showed a weaker bound of the form $(\log n)^{\Omega(\log n)}$ but for IMM$_{n,n}$. 

\begin{theorem}\label{thm-intro:main-gen-depth}
For a given integer $N$, there is an explicit polynomial $P_N$ defined over $N = n^2$ variables with degree $d = n$ that is set-multilinear with respect to the variable partition $X = (X_1,\ldots, X_d)$ where each $|X_i| = n$ such that any set-multilinear
formula computing  $P_N(X)$ must have size at least $N^{\Omega(\log N)}$.
\end{theorem}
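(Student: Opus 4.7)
The plan is to mirror the proof of Theorem~\ref{thm-intro:main-bd-depth}, using the same \emph{relative rank} measure $\mathrm{relrk}_w(\cdot)$ indexed by a word $w: [d] \to \{+1,-1\}$, which partitions $(X_1,\ldots,X_d)$ into ``positive'' and ``negative'' sides and associates to $f$ a coefficient matrix $M_w(f)$ whose rows and columns are set-multilinear monomials over those two sides. The idea is to remove the depth restriction by replacing the product-depth $\Delta$ decomposition with a general formula-balancing.

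First, I would invoke the classical Brent--Hyafil--Valiant formula-balancing argument, adapted to preserve set-multilinearity (essentially already available inside Raz's set-multilinearization framework), to write any set-multilinear formula $\Phi$ of size $s$ as $\sum_{i \le \mathrm{poly}(s)} \prod_{j=1}^{t_i} f_{i,j}$, where each $f_{i,j}$ is set-multilinear, $t_i = O(\log s)$, and the degrees $d_{i,j}$ form a geometric-like sequence summing to $d$. I would then argue that there is a balanced word $w^*$ for which each factor $f_{i,j}$ has word-imbalance $\Omega(\sqrt{d_{i,j}\log d})$ (or better); this yields $\mathrm{relrk}_{w^*}$ of each summand at most $n^{-\Omega(\log^2 n)}$, via a probabilistic argument on random balanced words combined with a union bound over the $\mathrm{poly}(s)$ summands. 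Hence $\mathrm{relrk}_{w^*}(\Phi) \le \mathrm{poly}(s)\cdot n^{-\Omega(\log^2 n)}$.

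For the matching lower bound on $\mathrm{relrk}_{w^*}(P_N)$, I would exploit the combinatorial design property of the NW polynomial: its monomials are indexed by low-degree univariate polynomials over $\mathbb{F}_n$, and since any two distinct such polynomials agree on only a few indices, the corresponding rows and columns of $M_{w^*}(P_N)$ remain linearly independent for any balanced word, giving $\mathrm{relrk}_{w^*}(P_N) \ge n^{-O(\log n)}$ (close to the maximum possible value for the parameters of $P_N$). Comparing the two bounds yields the desired $s \ge N^{\Omega(\log N)}$.

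The main obstacle will be the coupled word-choice step. In the bounded-depth proof the word can be built in a layer-by-layer greedy fashion to remain simultaneously balanced overall and imbalanced on every relevant factor; in the unbounded-depth setting the decomposition has $\omega(1)$ layers, so we can no longer afford to spend a constant fraction of the imbalance budget per layer. I expect to handle this via a more careful probabilistic analysis of a random balanced word, leveraging the geometric scaling of the factor degrees to keep the total imbalance budget bounded while still killing the relative rank of every summand.
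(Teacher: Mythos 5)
Your high-level plan matches the paper's proof almost exactly: both decompose a general set-multilinear formula into a polynomial-size sum of $\Theta(\log d)$-fold set-multilinear products with geometrically decaying factor degrees (the paper uses a set-multilinear Product Lemma; your Brent--Hyafil--Valiant invocation achieves the same), both then argue probabilistically that a random balanced word $w \in \{k,-k\}^d$ makes every summand's relative rank small, take a union bound, and compare against the NW polynomial. However, your key quantitative claim in the middle is wrong, and this is a genuine gap.

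You claim to find a balanced word $w^*$ for which \emph{every} factor $f_{i,j}$ has word-imbalance $\Omega(\sqrt{d_{i,j}\log d})$. This is impossible: the Product Lemma decomposition has a final factor of degree $d_{i,\ell}=1$, and a singleton block $S_j$ can only have $|w_{S_j}| \le 1$, not $\Omega(\sqrt{\log d})$. More fundamentally, anti-concentration of a random walk gives $\Pr[|w_{S_j}| \ge c\sqrt{|S_j|}]$ bounded \emph{below} by a constant strictly less than $1$, not near $1$; so asking for all factors of a single summand to be simultaneously $\sqrt{|S_j|}$-imbalanced already fails with constant probability per factor, and the union bound over all summands is hopeless. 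The correct and much weaker statement, which is what the paper proves (Claim~\ref{clm:sum-lb-easy}), is that because $\ell=\Theta(\log d)$ and a constant fraction of the blocks have size at least $d^{1/4}$ (by the geometric decay), with probability $1-d^{-\Omega(\log d)}$ a constant fraction of those blocks satisfy $|w_{S_j}|\ge 1$, yielding $\sum_j |w_{S_j}| = \Omega(\log d)$. This gives a relative-rank bound of $2^{-\Omega(k\log d)}=n^{-\Omega(\log d)}$ per summand, which is exactly what is needed and is the right target; your stated bound $n^{-\Omega(\log^2 n)}$ is off (presumably a conflation of $2^{-\Omega(\log^2 n)}$ with $n^{-\Omega(\log^2 n)}$).

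Two further observations. First, your ``main obstacle'' (maintaining global balance while keeping per-factor imbalance) is handled in the paper with an extremely clean trick that you did not quite reach: sample $w$ uniformly from $\{k,-k\}^d$, show the formula's bad event has probability at most $s\cdot d^{-\Omega(\log d)}$, and then observe this is $\ll \Pr[w_{[d]}=0]=\Theta(1/\sqrt d)$ when $s$ is small, so a balanced word with small formula-rank exists by conditioning; no greedy layered construction is needed. Second, your target $\mathrm{relrk}_{w^*}(P_N)\ge n^{-O(\log n)}$ undersells what is true: for any balanced $w$, the coefficient matrix of $NW_{n,d}$ is a \emph{permutation matrix} (by univariate polynomial interpolation, since $\deg f < d/2 = |\mathcal{P}_w|$), so $\mathrm{relrk}_w(NW_{n,d})=1$ exactly. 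This exact full-rank statement is what makes the final comparison between the upper and lower bounds clean, and your weaker version would force a delicate constant chase.
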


The hard polynomial in Theorem \ref{thm-intro:main-gen-depth} is also the NW polynomial, which if `improved' to IMM$_{n,n}$, then as discussed, would yield super-polynomial general formula lower bounds. However, we note that in this case, our result is in some sense subsumed by the result of Raz (\cite{Raz09}) who showed an $n^{\Omega(\log n)}$ lower bound for the $n\times n$ permanent (or determinant) polynomial for unbounded-depth multilinear formulas.

\paragraph{Other Related Work.}

In the bounded-depth setting, other than the works \cite{LST1,LST2,NisanW97} already mentioned, there have been several lower bounds for the class of low-depth \emph{multilinear} circuits (\cite{RazY09,ChillaraL019,ChillaraEL018,KayalNS20}). In the unbounded-depth setting, apart from the works \cite{LST2,Raz09} already mentioned for set-multilinear formulas, there have also been several strong lower bounds of the form $n^{\Omega(\log n)}$ against \emph{multilinear} formulas (\cite{DvirMPY12,HrubesY11,Kayal0T18}). However, in both settings of depth, several of these works are not even applicable to the set-multilinear setting as the corresponding hard polynomial does not happen to be set-multilinear.

\paragraph{Proof overview.}

Our overall proof techniques are similar to that of many known lower bounds. We work with a measure that we show to be small for all polynomials computed by small enough set-multilinear
formulas (appropriately so in the bounded and unbounded-depth settings)  and large for the NW polynomial. These \emph{partial derivative measures} were introduced by
Nisan and Wigderson in \cite{NisanW97}, who used them to prove the constant-depth set-multilinear formula lower bounds we discussed earlier. \cite{LST1,LST2} use a particular variant of this measure and our measure is in turn inspired from these works. 

Given a variable partition $(X_1,\ldots, X_d)$, we label each set of variables $X_i$ as `positive' or `negative' uniformly at random. Let $\calP$ and $\calN$ denote the set of positive and negative indices respectively, and let $\calM^\calP$ and $\calM^\calN$ denote the sets of all set-multilinear monomials over $\calP$ and $\calN$ respectively. For a polynomial that is set-multilinear over the given variable partition $(X_1,\ldots, X_d)$, our measure then is simply the rank of the `partial derivative matrix' whose rows are indexed by the elements
of $\calM^{\calP}$ and columns indexed by $\calN^{\calP}$, and the entry of this matrix corresponding to a
row $m_1$ and a column $m_2$ is the coefficient of the monomial $m_1\cdot m_2$ in the given polynomial.

In contrast, the measure used in \cite{LST1} is deterministic and moreover, it is \emph{asymmetric} with respect to the positive and negative variable sets, in the sense that while keeping the positive variable sets as is, it first reduces the size of the negative variable sets by arbitrarily setting a few of these variables to field constants, and then works with the resulting polynomial. On the other hand, \cite{LST2} does use a randomized measure, but one that is still asymmetric, relying on randomly setting a few of the variables inside each set to constants. The way they control the discrepancy between the sizes of the positive and negative variable sets (which is indeed crucial for obtaining the claimed lower bounds) is by imposing a Martingale-like distribution. The lower bound of \cite{NisanW97} also uses random restrictions to enable them to effectively ``simplify" the circuit and upper bound its complexity. Our symmetric, randomized measure avoids random restrictions altogether, and though it is inspired by the measure and the techniques from~\cite{LST1}, it is also reminiscent of the measures used in \cite{Raz09,RazY09} to prove multilinear formula lower bounds.

\section{Preliminaries}

We begin by defining the hard polynomial of our main result (Theorem \ref{thm-intro:main-bd-depth}). As is done in previous lower bounds using the NW polynomials (for example, see \cite{KayalSS14}), we will identify the set of the first $n$ integers as elements of $\F_n$ via an arbitrary correspondence $\phi : [n] \rightarrow \F_n$. If $f(z) \in \F_n[z]$ is a univariate polynomial, then we abuse notation to let $f(i)$ denote the evaluation
of $f$ at the $i$-th field element via the above correspondence i.e., $f(i)\coloneqq \phi^{-1}
(f(\phi(i)))$. To simplify the exposition, in the following definition, we will omit the correspondence $\phi$ and identify a variable
$x_{i,j}$ by the point $(\phi(i), \phi(j)) \in \F_n \times \F_n$.

\begin{definition}[Nisan-Wigderson Polynomials]\label{def:NW}
For a prime power $n$,
let $\F_n$ be a field of size $n$. For an integer $d\leq n$ and the set $X$ of $nd$ variables
$\{x_{i,j} : i\in[n], j \in [d]\}$, we define the degree $d$
homogeneous polynomial $NW_{n,d}$ over any field as
\[
NW_{n,d}(X) = \sum_{\substack{f(z)\in\F_n[z]
\\ \deg(f)<d/2}} \prod_{j\in[d]} x_{f(j),j}.
\]
\end{definition}

Next, we turn to the measure that we shall use to prove Theorems \ref{thm-intro:main-bd-depth} and \ref{thm-intro:main-gen-depth}. For the purpose of setting it up, we follow the notation of \cite{LST1} in the following definition. However, we do remark that we do not need it in its full generality as we will eventually work with a simpler, \emph{symmetric} notion that was alluded to in Section \ref{sec:intro}. Nevertheless, employing the same notation has the advantage that the reader is quite possibly already familiar with it in the context of proving set-multilinear circuit lower bounds.

\begin{definition}[Relative Rank Measure of \cite{LST1,LST2}]
Let $f$ be a polynomial that is set-multilinear with respect to the variable
partition $(X_1, X_2,\ldots, X_d)$ where each set is of size $n$. Let $w = (w_1, w_2,\ldots, w_d)$ be a tuple (or word) of non-zero real numbers such that $2^{|w_i|} \in [n]$ for all $i \in [d]$. For each $i \in [d]$, let $X_i(w)$ be the
variable set obtained by removing arbitrary variables from the set $X_i$ such that $|X_i(w)| = 2^{|w_i|}$, and let $\ol{X}(w)$ denote the tuple of sets of variables $(X_1(w),\ldots,X_d(w))$.
Corresponding to a word $w$, define $\calP_w \coloneqq \{i\ |\ w_i > 0\}$ and $\calN_w \coloneqq \{i\ |\ w_i < 0\}$. Let $\calM^{\calP}_{w}$ be the
set of all set-multilinear monomials over a subset of the variable sets $X_1(w), X_2(w),\ldots, X_d(w)$
indexed by $\calP_w$, and similarly let $\calM^{\calN}_{w}$ be the set of all set-multilinear monomials over these variable
sets indexed by $\calN_w$.

Define the ‘partial derivative matrix’ matrix $\calM_w(f)$ whose rows are indexed by the elements
of $\calM^{\calP}_w$ and columns indexed by the elements of $\calN^{\calP}_w$ as follows: the entry of this matrix corresponding to a
row $m_1$ and a column $m_2$ is the coefficient of the monomial $m_1\cdot m_2$ in $f$. We define
\[
\rk_w(f) \coloneqq \frac{\mathrm{rank}(\calM_w(f))}{\sqrt{|\calM^{\calP}_w|\cdot |\calM^{\calN}_w|}} = \frac{\mathrm{rank}(\calM_w(f))}{2^{\frac{1}{2}\sum_{i\in[d]}|w_i|}}.
\]
\end{definition}

\begin{definition}
For any tuple $w = (w_1,\ldots, w_t)$ and a subset $S \subseteq [t]$, we shall refer to the
sum $\sum_{i\in S} w_i$ by $w_S$. And by $w|_S$, we will refer to the tuple obtained by considering only the
elements of $w$ that are indexed by $S$. We denote by $\Fsm[\calT]$ the set of set-multilinear polynomials over the tuple
of sets of variables $\calT$.
\end{definition}

The following is a simple result that establishes various useful properties of the relative rank measure.
\begin{claim}[\cite{LST1}]\label{clm:rk-props}
\begin{enumerate}
    \item(Imbalance) Say $f \in \Fsm[\ol{X}(w)]$. Then, $\rk_w(f)\leq 2^{-|w_{[d]}|/2}$.
    \item(Sub-additivity) If $f,g \in \Fsm[\ol{X}(w)]$, then $\rk_w(f+g)\leq \rk_w(f)+\rk_w(g)$.
    \item(Multiplicativity) Say $f = f_1 f_2\cdots f_t $ and assume that for each $i\in [t]$, $f_i\in \Fsm[\ol{X}(w|_{S_i})]$, where $(S_1, \ldots, S_t)$ is a partition of $[d]$. Then
    \[
    \rk_w(f) = \prod_{i\in[t]} \rk_{w|_{S_i}}(f_i).
    \]
\end{enumerate}
\end{claim}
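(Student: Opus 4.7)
The proof of Claim \ref{clm:rk-props} will amount to routing each of the three parts through a standard linear-algebraic fact about the partial derivative matrix $\calM_w(f)$, combined with bookkeeping about the sizes of $\calM^{\calP}_w$ and $\calM^{\calN}_w$. Throughout, note that $|\calM^{\calP}_w| = 2^{w_{\calP_w}}$ and $|\calM^{\calN}_w| = 2^{-w_{\calN_w}}$, since each set $X_i(w)$ has size $2^{|w_i|}$ and the signs of $w_i$ on $\calP_w$, $\calN_w$ are positive and negative respectively.

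For \emph{Imbalance}, the plan is to use the trivial bound $\mathrm{rank}(\calM_w(f)) \le \min(|\calM^{\calP}_w|, |\calM^{\calN}_w|)$. Writing $A = w_{\calP_w} > 0$ and $B = -w_{\calN_w} > 0$, dividing by the normalization $\sqrt{2^{A+B}}$ gives $\rk_w(f) \le 2^{\min(A,B) - (A+B)/2} = 2^{-|A-B|/2} = 2^{-|w_{[d]}|/2}$, exactly as claimed. For \emph{Sub-additivity}, I would observe that $f \mapsto \calM_w(f)$ is linear in $f$ because the entries are coefficients of monomials in $f$; hence $\calM_w(f+g) = \calM_w(f) + \calM_w(g)$, and the standard fact $\mathrm{rank}(A+B) \le \mathrm{rank}(A) + \mathrm{rank}(B)$ together with division by the common normalization factor gives the desired inequality.

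The only part that needs a small amount of structural argument is \emph{Multiplicativity}. Here the plan is to show that, after reordering rows and columns, $\calM_w(f) = \calM_{w|_{S_1}}(f_1) \otimes \calM_{w|_{S_2}}(f_2) \otimes \cdots \otimes \calM_{w|_{S_t}}(f_t)$. This follows because a set-multilinear monomial of $f$ over the positive indices $\calP_w$ decomposes uniquely into a product of set-multilinear monomials of the $f_i$ over $\calP_w \cap S_i$ (and similarly for $\calN_w$), and the coefficient of such a product monomial in $f$ is exactly the product of the coefficients of the factors in the $f_i$ — this is where disjointness of the $S_i$ is essential. Tensoring the matrices multiplies both rank and the sizes of $\calM^{\calP}$, $\calM^{\calN}$, so the normalization factor in the definition of $\rk_w$ splits across the product in exactly the right way, and the identity $\rk_w(f) = \prod_i \rk_{w|_{S_i}}(f_i)$ drops out.

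I do not anticipate any real obstacle: each item is essentially a one-line linear-algebra fact (rank at most the minimum dimension, subadditivity of rank under addition, multiplicativity of rank under tensor product), and the only care required is matching the normalization to the partition of $[d]$ in the multiplicativity step. The argument does not use set-multilinearity of $f$ beyond the unique factorization of monomials and the disjoint-index structure that allows the coefficient of $m_1 \cdot m_2$ in $\prod_i f_i$ to be read off as a product of coefficients from each $f_i$.
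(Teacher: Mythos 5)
Your proof is correct, and it is the standard argument: the paper itself cites this claim from \cite{LST1} without reproducing the proof, and your three-step argument (rank at most the smaller dimension for imbalance, linearity of $f \mapsto \calM_w(f)$ plus subadditivity of rank for sub-additivity, and the Kronecker-product decomposition of $\calM_w(f)$ along the partition for multiplicativity) is exactly how this is established there. The bookkeeping identities $|\calM^\calP_w| = 2^{w_{\calP_w}}$, $|\calM^\calN_w| = 2^{-w_{\calN_w}}$, and $w_{[d]} = w_{\calP_w} + w_{\calN_w}$ are used correctly, and your observation that disjointness of the $S_i$ is what makes the coefficient of $m_1 m_2$ in $\prod_i f_i$ factor is the right place to flag where set-multilinearity enters.
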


\section{Main Result}\label{sec:main}

We are now ready to prove our main result. We start by showing that the \emph{symmetric} relative rank is large for the NW polynomial. 

\begin{claim}\label{clm:NW-full-rk}
For an integer $n = 2^k$ and $d\leq n$, let $w \in \{k,-k\}^d$ with $w_{[d]} = 0$. Then $\rk_w(NW_{n,d}) = 1$ i.e., $\calM_w(NW_{n,d})$ has full rank.
\end{claim}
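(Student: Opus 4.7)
The plan is to show that under the hypotheses, the matrix $\calM_w(NW_{n,d})$ is in fact a permutation matrix (up to labeling of rows and columns), from which full rank is immediate. First I would unpack the setup: since $|w_i| = k$ for every $i$, we have $2^{|w_i|} = n$, so no variables are removed and $X_i(w) = X_i$. Since $w_{[d]} = 0$ and each $w_i \in \{k,-k\}$, the sets $\calP_w$ and $\calN_w$ are a balanced partition of $[d]$, each of size $d/2$. Thus $\calM^{\calP}_w$ and $\calM^{\calN}_w$ both have size $n^{d/2}$, so the matrix is $n^{d/2} \times n^{d/2}$ and the denominator in the definition of $\rk_w$ is $2^{kd/2} = n^{d/2}$. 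Consequently, $\rk_w(NW_{n,d}) = 1$ is equivalent to the matrix having full rank $n^{d/2}$.

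Next I would unwind the entries of the matrix. A set-multilinear monomial over $X_i$ for $i \in \calP_w$ is exactly a function $\calP_w \to [n]$, which we identify (via $\phi$) with a function $\calP_w \to \F_n$; similarly for $\calN_w$. The monomials of $NW_{n,d}$ are in bijection with polynomials $f \in \F_n[z]$ of degree $< d/2$, and two distinct such polynomials agree on at most $d/2 - 1$ of the $d$ points in $[d] \subseteq \F_n$ (using $d \leq n$), so they yield distinct monomials. Hence each monomial appears in $NW_{n,d}$ with coefficient exactly $1$, and the entry of $\calM_w(NW_{n,d})$ at $(m_1, m_2)$ is $1$ if and only if there exists a polynomial $f \in \F_n[z]$ of degree $< d/2$ whose evaluations at $\calP_w$ match $m_1$ and whose evaluations at $\calN_w$ match $m_2$, and $0$ otherwise.

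The key input is now Lagrange interpolation: since $|\calP_w| = d/2$ and these are distinct field elements, any assignment $m_1 : \calP_w \to \F_n$ is realized by a \emph{unique} polynomial $f_{m_1}$ of degree $< d/2$. The entry $(m_1, m_2)$ is therefore $1$ precisely when $m_2$ equals the restriction of $f_{m_1}$ to $\calN_w$, and $0$ otherwise. So each row of $\calM_w(NW_{n,d})$ contains exactly one $1$. By the symmetric argument applied to columns (each $m_2$ determines a unique interpolant $f_{m_2}$ of degree $< d/2$), each column also contains exactly one $1$. This forces the matrix to be a permutation matrix, hence invertible, and we conclude $\mathrm{rank}(\calM_w(NW_{n,d})) = n^{d/2}$ and $\rk_w(NW_{n,d}) = 1$.

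There is really no serious obstacle here; the only subtlety worth checking carefully is the identification of $[d]$ with a subset of $\F_n$ via $\phi$ (which requires $d \leq n$, given in the hypothesis) so that Lagrange interpolation applies on the point set $[d]$. Once that is in place, the proof is essentially a bookkeeping statement about NW polynomials, and the argument does not use anything beyond Claim \ref{clm:rk-props} implicitly (it is a direct computation of rank).
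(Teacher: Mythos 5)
Your proposal is correct and follows essentially the same route as the paper: both identify a row-index monomial $m_1$ with a map $\calP_w \to \F_n$, invoke uniqueness of the Lagrange interpolant of degree $< d/2$ on the $d/2$ points of $\calP_w$ (and symmetrically for $\calN_w$), and conclude that $\calM_w(NW_{n,d})$ is a permutation matrix. The only (minor) addition you make is explicitly verifying that distinct degree-$<d/2$ polynomials yield distinct monomials so that coefficients are exactly $0$ or $1$, a point the paper leaves implicit.
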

\begin{proof}
Fix $n = 2^k$ and $d$, so that we can also write $NW$ for $NW_{n,d}$, and let $n' = d/2$. The condition on $w$ implies that $|\calP_w| = |\calN_w| = n'$. Observe that $\calM_w(NW)$ is a square matrix of dimension $|\calM^{\calP}_{w}| = |\calM^{\calN}_{w}| = n^{n'}$. Consider a row of $\calM_w(NW)$ indexed by a monomial $m_1 = x_{i_1,j_1}\cdots x_{i_{n'},j_{n'}}\in \calM^{\calP}_{w}$. $m_1$ can be thought of as a map from $S = \{j_1,\ldots,j_{n'}\}$ to $\F_n$ which sends $j_\ell$ to $i_\ell$ for each $\ell \in [n']$. Next, by interpolating the pairs $(j_1,i_1),\ldots, (j_{n'},i_{n'})$, we know that there exists a unique polynomial $f(z)\in \F_n(z)$ of degree $<n'$ for which $f(j_\ell) = i_\ell$ for each $\ell\in [n']$. As a consequence, there is a unique `extension' of the monomial $x_{i_1,j_1}\cdots x_{i_{n'},j_{n'}}$ that appears as a term in $NW$, which is precisely $m_1\cdot \prod_{j\in \calN_w}x_{f(j),j}$. Therefore,
all but one of the entries in the row corresponding to $m_1$ must be zero, and the remaining entry must be $1$. Applying the same argument to the columns of $\calM_w(NW)$, we deduce that $\calM_w(NW)$ is a permutation matrix, and so has full rank.
\end{proof}

The following is a more precise and general version of Theorem \ref{thm-intro:main-bd-depth} that is stated in Section \ref{sec:intro}. We also incorporate Remark~\ref{rem:true-bd} here and show our lower bound for any degree $d\leq n$. Theorem \ref{thm-intro:main-bd-depth} follows from the special case $d = n$.  

\begin{theorem}\label{thm:main-bd-depth}
For an integer $n = 2^k$,
let $\F_n$ be a field of size $n$. Let $d,\Delta$ be integers such that $d\leq n$ is large enough\footnote{We only need $d$ to be larger than some absolute constant.} and $\Delta \leq \log d/ \log \log d$. Let $X_i$ denote the set of $n$ variables
$\{x_{i,j} : j \in [d]\}$ and $X$ be the tuple $(X_1,\ldots, X_d)$. Then, any set-multilinear
formula family of product-depth $\Delta$ computing $NW_{n,d}(X)$ must have size at least $d^{\Omega(d^{1/\Delta}/\Delta)}$.
\end{theorem}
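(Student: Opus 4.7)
The plan is to sample $w$ uniformly from the balanced words $W := \{w \in \{k,-k\}^d : w_{[d]} = 0\}$. By Claim~\ref{clm:NW-full-rk}, $\rk_w(NW_{n,d}) = 1$ for every $w \in W$, so in particular $\mathbb{E}_{w \sim W}[\rk_w(NW_{n,d})] = 1$. It then suffices to prove the matching upper bound
$$\mathbb{E}_{w \sim W}[\rk_w(C)] \leq s \cdot d^{-\Omega(d^{1/\Delta}/\Delta)}$$
for every set-multilinear formula $C$ of size $s$ and product-depth $\Delta$; this immediately forces $s \geq d^{\Omega(d^{1/\Delta}/\Delta)}$ when $C$ computes $NW_{n,d}$.

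To establish the upper bound I would adapt the LST parse-tree/balanced-decomposition strategy to our symmetric measure. By sub-additivity and multiplicativity of $\rk_w$, bounding $\rk_w(C)$ reduces to bounding, for each "product expression" $g_1 \cdot g_2 \cdots g_t$ extracted from the formula (with $g_j \in \Fsm[\ol{X}(w|_{S_j})]$ and $(S_j)_j$ partitioning $[d]$), the quantity
$$\rk_w(g_1 \cdots g_t) = \prod_j \rk_{w|_{S_j}}(g_j) \leq \prod_j 2^{-|w_{S_j}|/2} = 2^{-\frac{1}{2} \sum_j |w_{S_j}|},$$
where the inequality is the imbalance property from Claim~\ref{clm:rk-props}. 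The remaining task is thus to show that, for any partition $(S_j)$ arising this way, $\mathbb{E}_{w \sim W}[2^{-\frac{1}{2} \sum_j |w_{S_j}|}] \leq d^{-\Omega(d^{1/\Delta}/\Delta)}$. Here I would invoke the LST structural observation that every product-depth $\Delta$ formula admits a parse-tree decomposition with a "balanced level": an appropriately chosen level at which many of the block sizes $|S_j|$ lie in a narrow geometric interval around $d^{1/\Delta}$, with $\Omega(d^{1-1/\Delta})$ such blocks. For each such block the restriction $w|_{S_j}$ is essentially a balanced $\pm k$ random walk of length $|S_j|$, so a Khintchine-type computation bounds $\mathbb{E}[2^{-|w_{S_j}|/2}]$ by a genuine $n^{-\Omega(1)}$ factor; multiplying these block-wise contributions and paying a $1/\Delta$ factor for selecting the correct level yields the target $d^{-\Omega(d^{1/\Delta}/\Delta)}$.

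I expect the main obstacle to be the joint probabilistic estimate: the global conditioning $w_{[d]} = 0$ couples the block sums $w_{S_j}$, so a careful conditional random-walk analysis is needed to ensure that the block-wise expectations combine multiplicatively (or near-multiplicatively) under the constraint. A related subtlety is verifying that the LST-style balanced-parse-tree selection goes through cleanly for our symmetric measure, so that a single parse tree delivers both balanced block sizes and enough block-wise independence for the argument to close. Once these ingredients are aligned, the improvement from the $\log n$ base in LST to the $d$ base here should emerge naturally: each balanced block contributes a genuine $n^{-\Omega(1)}$ factor under our symmetric sampling of $w$, as opposed to the weaker $(\log n)^{-\Omega(1)}$ factor produced by LST's asymmetric restriction-based measure.
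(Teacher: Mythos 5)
Your high-level framework (symmetric relative-rank measure, full rank of $NW$ on balanced $w$, imbalance/sub-additivity/multiplicativity) matches the paper, but the two load-bearing technical steps are either off or unresolved. First, the paper does \emph{not} sample from the balanced set $W$: it samples $w$ uniformly from $\{k,-k\}^d$ with all coordinates independent, proves a high-probability rank upper bound (Lemma~\ref{lem:rk-bd-depth}), and only at the very end compares the tiny failure probability $s\cdot d^{-d^{1/\Delta}/(12\Delta)}$ to $\Pr[w_{[d]}=0]=\Theta(1/\sqrt d)$ to extract a single balanced $w$ where both the rank upper bound and Claim~\ref{clm:NW-full-rk} hold. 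This cleanly sidesteps precisely the coupling obstacle you flag as the ``main obstacle''; your version, which takes an expectation over $W$ directly, would force you to carry the conditional-random-walk dependence through every block estimate, and you offer no plan for closing this.

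Second, the structural step is not a single ``balanced-level parse-tree'' lemma and I don't believe the statement you invoke is available: at any fixed level, a formula can have one huge factor and several tiny ones, so there is no level at which all block sizes concentrate. The paper instead runs an induction on product-depth with a two-case split per summand $C_i$: if some factor has degree $\geq T_\Delta=d^{\Delta/(\Delta+1)}$ (type 1) recurse into that factor; otherwise (type 2) there are $\geq d^{1/(\Delta+1)}$ factors, and after a ``clubbing'' step one has $\Theta(d^{1/(\Delta+1)})$ blocks each of size $\Theta(d^{\Delta/(\Delta+1)})$, to which a Stirling/anti-concentration bound (Claim~\ref{clm:sum-lb}) is applied. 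Note that your block sizes and block count are swapped: blocks of size roughly $d^{1-1/\Delta}$, roughly $d^{1/\Delta}$ of them — not blocks of size $d^{1/\Delta}$. Relatedly, the per-block contribution in expectation is \emph{not} $n^{-\Omega(1)}$: $\mathbb{E}[2^{-|w_{S_j}|/2}]$ is dominated by $\Pr[w_{S_j}=0]\approx |S_j|^{-1/2}=d^{-\Omega(1)}$, and since $d\leq n$ this can be much larger than $n^{-\Omega(1)}$. The $n$-dependence in the paper enters only through the value of the rank bound ($2^{-kd^{1/\Delta}/20}=n^{-d^{1/\Delta}/20}$) on the good event, while the failure probability gives the $d^{-\Omega(d^{1/\Delta}/\Delta)}$ factor; both are needed and the final bound is $d^{\Omega(d^{1/\Delta}/\Delta)}$ by $n\ge d$, not by an $n^{-\Omega(1)}$-per-block product.
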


\begin{proof}
We show that the symmetric relative rank of low-depth set-multilinear formulas is small with high probability in the lemma below, and then combine it with Claim \ref{clm:NW-full-rk} above to prove the desired bound.
\begin{lem}\label{lem:rk-bd-depth}
Let $C$ be a set-multilinear formula  of product-depth $1\leq \Delta \leq \log d/ \log \log d$ of size at most $s$ which computes a polynomial that is set-multilinear with respect to the partition $(X_1,\ldots, X_d)$ where each $|X_i| = n$. Let $w \in \{k,-k\}^d$ be chosen uniformly at random. Then, we have
\[
\rk_w(C)\leq  s \cdot 2^{-\frac{kd^{1/\Delta}}{20}}
\]
with probability at least $1 - s\cdot d^{-\frac{d^{1/\Delta}}{12\Delta}}$.
\end{lem}
\begin{proof}
We prove the statement by induction on $\Delta$.

If $\Delta = 1$, then $C =  C_1 + \cdots +C_t$ where each $C_i$ is a product of linear forms. So, for all $i\in [t]$, by Claim \ref{clm:rk-props},
\[
\rk_w(C_i) = \prod_{i=1}^d 2^{-\frac{1}{2}|w_j|} = 2^{-\frac{kd}{2}}
\]
where in the last step, we used the observation that regardless of the choice of $w$, $|w_j| = k$ for all $j\in [n]$. Hence, by the sub-additivity of $\rk_w$, with probability $1$, we have
\[\rk_w(C) \leq s\cdot 2^{-\frac{kd}{2}}\leq  s\cdot 2^{-\frac{kd}{20}}.
\]

Next, we assume the statement is true for all formulas of product-depth $\leq \Delta$. Let $C$ be a formula of
product-depth $\Delta + 1$. 
So, $C$ is of the form $C = C_1 + \cdots + C_t$. Following an overall proof strategy similar to the one in \cite{LST1}, we say that a sub-formula $C_i$ of size $s_i$ is of type 1 if one of its factors has
degree at least $T_\Delta = d^{\frac{\Delta}{\Delta+1}}$, otherwise we say it is of type 2.

Suppose $C_i = C_{i,1}\cdot \cdots \cdot C_{i,t_i}$ is of type 1 with, say, $C_{i,1}$ having degree at least $T_\Delta$. Let $w^{i,1}$ be the corresponding word i.e., $w^{i,1} = w|_{S_1}$ if $C_{i,1}$ is set-multilinear with respect to $S_1\subsetneq [d]$. If it has size $s_{i,1}$, then since it has product-depth at most $\Delta$, it follows by induction that 
\[
\rk_w(C_i) \leq \rk_{w^{i,1}}(C_{i,1}) \leq s_{i,1}\cdot 2^{-\frac{kT_{\Delta}^{1/\Delta}}{20}} \leq s_{i}\cdot 2^{-\frac{kd^{1/(\Delta+1)}}{20}}
\]
with probability at least 
\[
1- s_{i,1}\cdot T_\Delta^{-\frac{T_\Delta^{1/\Delta}}{12\Delta}} \geq 1- s_{i}\cdot d^{-\frac{d^{1/(\Delta+1)}}{12\Delta }\cdot \frac{\Delta}{\Delta + 1}} = 1- s_{i}\cdot d^{-\frac{d^{1/(\Delta+1)}}{12(\Delta+1)}}.
\]

Now suppose that  $C_i= C_{i,1}\cdot \cdots \cdot C_{i,t_i}$ is of type 2 i.e., each factor $C_{i,j}$ has degree $<T_\Delta$. Note that this forces $t_i> d/T_\Delta = d^{ \frac{1}{\Delta + 1}}$. As the formula is set-multilinear, $(S_1, \ldots, S_{t_i})$ form a partition of $[d]$
where each $C_{i,j}$ is set-multilinear with respect to $(X_\ell)_{\ell\in S_j}$ and $C_i$ is set-multilinear with
respect to $(X_\ell)_{\ell\in S}$. Let $w^{i,1},\ldots, w^{i,t_i}$ be the corresponding decomposition, whose respective sums are denoted simply by $w_{S_1},\ldots,w_{S_{t_i}}$.

From the properties of $\rk_w$ (Claim~\ref{clm:rk-props}), we have
\[
\rk_w(C_i) = \prod_{j=1}^{t_i} \rk_{w^{i,j}}(C_{i,j}) \leq \prod_{j=1}^{t_i} 2^{-\frac12 |w_{S_j}|} = 2^{-\frac12\sum_{j=1}^{t_i}|w_{S_j}|},
\]
from which we observe that the task of upper bounding $\rk_w(C)$ can be reduced to the task of lower bounding the sum $\sum_{j=1}^{t_i}|w_{S_j}|$, which is established in the following claim. For the sake of convenience, the choice of the alphabet for $w$ below is scaled down to $\{-1,1\}$.

\begin{claim}\label{clm:sum-lb}
For large enough $d$, suppose $(S_1, \ldots, S_{\ell})$ is a partition of $[d]$ such that each $|S_j| < T_\Delta = d^{\frac{\Delta}{\Delta+1}}$. Then, we have
\[
\Pr_{w\sim \{-1,1\}^d}\left[\sum_{j=1}^{\ell}|w_{S_j}| < \frac{d^{1/(\Delta+1)}}{10}\right] \leq d^{-\frac{d^{1/(\Delta+1)}}{12}}.
\]
\end{claim}
\begin{proof}
We first show that without loss of generality, we may assume that each $S_j$ has size `roughly' $T_\Delta$. To see this, we apply the following \emph{clubbing} procedure to the sets in the partition $(S_1, \ldots, S_{\ell})$: 
\begin{itemize}
    \item Start with the given partition $(S_1, \ldots, S_{\ell})$. At each step in the procedure, we shall `club' two of the sets in the partition according to the following rule.
    \item If there are two distinct sets $S'$ and $S''$ in the current partition each of size $< T_\Delta/2$, we remove both of them and add their union $S'\cup S''$ to the partition.
    \item If the rule above is no longer applicable, then we have at most one set in the current partition of size $<T_\Delta/2$. If there is none, then we halt the procedure here. Otherwise, we union this set with any one of the other sets  and then halt.
\end{itemize}
After the clubbing procedure, we are left with a partition $(S_1',\ldots,S_{\ell'}')$ of $[d]$ such that $\frac{T_\Delta}{2}\leq |S_j'|\leq \frac{3T_\Delta}{2}$ for each $j\in [\ell']$, also implying that $\frac{2d^{1/(\Delta+1)}}{3}\leq \ell'\leq 2d^{1/(\Delta+1)}$. Through a repeated use of the triangle inequality, we see that $\sum_{j=1}^{\ell'}|w_{S'_j}|\leq \sum_{j=1}^{\ell}|w_{S_j}|$. Therefore, upper bounding the latter sum is a `smaller' event than upper bounding the former sum. Hence, it suffices to prove the statement of the claim with the assumption that $\frac{T_\Delta}{2}\leq |S_j|\leq \frac{3T_\Delta}{2}$ for each $j\in [\ell]$ (we henceforth drop the primed notation).

Now, in the event that the sum $\sum_{j=1}^{\ell}|w_{S_j}|$ is at most $\frac{d^{1/(\Delta+1)}}{10}$, since $\ell\geq \frac{2d^{1/(\Delta+1)}}{3}$, it follows that for at least half of the sets $S_j$, $w_{S_j} = 0$ (as $\frac{2}{3} - \frac{1}{10} = \frac{17}{30}>\frac12$). By Stirling's approximation, it follows that for a fixed $j$, the probability 
\[
\Pr_{w\sim \{-1,1\}^d}\left[w_{S_j} = 0\right]\leq \sqrt{\frac{2}{\pi |S_j|}}\leq \sqrt{\frac{4}{\pi T_\Delta}} = \sqrt{\frac{4}{\pi}}\cdot \frac{1}{d^{\frac{\Delta}{2(\Delta+1)}}}< \frac{2}{d^{1/3}},
\]
where in the final step, we used $\Delta\geq 2$. Therefore, the probability that this happens for $\ell/2$ distinct $j$ is bounded by
\[\binom{\ell}{\ell/2} \cdot \left(\frac{2}{d^{1/3}}\right)^\frac{\ell}{2}< 2^\ell\cdot \left(\frac{2}{d^{1/3}}\right)^\frac{\ell}{2} = \left(\frac{2\sqrt{2}}{d^{1/6}}\right)^\ell \leq \left(\frac{2}{d^{1/9}}\right)^{{d^{1/(\Delta+1)}}}< d^{-\frac{d^{1/(\Delta+1)}}{12}},\]
where we used the bound $\ell\geq \frac{2d^{1/(\Delta+1)}}{3}$.
\end{proof}

The claim above and the preceding calculation immediately implies that for a sub-formula $C_i$ of type 2, 
\[
\rk_w(C_i) \leq s_{i}\cdot 2^{-\frac{kd^{1/(\Delta+1)}}{20}}
\]
with probability at least $1-d^{-\frac{d^{1/(\Delta+1)}}{12}}\geq 1 - s_i\cdot d^{-\frac{d^{1/(\Delta+1)}}{12(\Delta+1)}}$.

Next, by a union bound over $i\in [t]$ and the sub-additivity property of $\rk_w$, it follows that 
\[
\rk_w(C) \leq \rk_w(C_1) +\cdots + \rk_w(C_t) \leq s_1 \cdot 2^{-\frac{kd^{1/(\Delta+1)}}{20}} + \cdots + s_t \cdot 2^{-\frac{kd^{1/(\Delta+1)}}{20}} = s \cdot 2^{-\frac{kd^{1/(\Delta+1)}}{20}}
\]
with probability at least $1 - s\cdot d^{-\frac{d^{1/(\Delta+1)}}{12(\Delta+1)}}$, which concludes the proof of the lemma.
\end{proof}
Returning to the proof of the theorem, let $C$ be a set-multilinear formula of product depth $\Delta$ of size $s$ computing $NW_{n,d}(X)$. Suppose $s < d^{\frac{d^{1/\Delta}}{24\Delta}}$. Then, by Lemma \ref{lem:rk-bd-depth}, with probability at least $1 - d^{-\frac{d^{1/\Delta}}{24\Delta}}$,
\[
\rk_w(C)\leq  s \cdot 2^{-\frac{kd^{1/\Delta}}{20}}.
\] 
But now, we can condition on the event that $w_{[d]} = 0$ (which occurs with probability $\Theta(\frac{1}{\sqrt{d}}$)) to establish the existence of a word $w\in \{-k,k\}^d$ with $w_{[d]} = 0$ such that $w$ satisfies $
\rk_w(C)\leq  s \cdot 2^{-\frac{kd^{1/\Delta}}{20}}$. This is because of the asymptotic bound $\frac{1}{\sqrt{d}} \gg d^{-\frac{d^{1/\Delta}}{24\Delta}}$, which follows from the given constraints on the parameters $d,\Delta$. Therefore, by Claim \ref{clm:NW-full-rk},
\[
s\geq 2^{\frac{kd^{1/\Delta}}{20}}\cdot \rk_w(C) = n^{\frac{d^{1/\Delta}}{20}}
\]
which contradicts the assumption that $s < d^{\frac{d^{1/\Delta}}{24\Delta}}$. Thus, we conclude that $s\geq d^{\frac{d^{1/\Delta}}{24\Delta}} = d^{\Omega(d^{1/\Delta}/\Delta)}$.

\end{proof}

Next, we show the supplementary result (Theorem \ref{thm-intro:main-gen-depth}) mentioned in Section \ref{sec:intro}, stated more precisely below.

\begin{theorem}\label{thm:main-gen-depth}
For an integer $n = 2^k$,
let $\F_n$ be a field of size $n$ and suppose $d\leq n$ is large enough. Let $X_i$ denote the set of $n$ variables
$\{x_{i,j} : j \in [n]\}$ and $X$ be the tuple $(X_1,\ldots, X_d)$.  Then, any set-multilinear
formula family computing  $NW_{n,d}(X)$ must have size at least $d^{\Omega(\log d)}$.
\end{theorem}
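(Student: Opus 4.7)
Plan: The proof of Theorem \ref{thm:main-gen-depth} will mirror that of Theorem \ref{thm:main-bd-depth}, replacing the induction on product-depth with an induction on the formal degree $d$. Concretely, I would aim to establish the following unbounded-depth analog of Lemma \ref{lem:rk-bd-depth}: for any set-multilinear formula $C$ of size $s$ computing a polynomial of degree $d$ and for $w \in \{-k,k\}^d$ chosen uniformly at random,
\[ \rk_w(C) \leq s \cdot 2^{-\Omega(k \log d)} \]
with probability at least $1 - s \cdot d^{-\Omega(\log d)}$. Given this, Theorem \ref{thm:main-gen-depth} follows by the same reasoning as the final step of Theorem \ref{thm:main-bd-depth}: assume $s < d^{c \log d}$ for small $c > 0$ toward a contradiction, observe that $w_{[d]} = 0$ occurs with probability $\Theta(1/\sqrt{d})$, and combine with Claim \ref{clm:NW-full-rk} to force $\rk_w(NW_{n,d}) = 1$, yielding the required size lower bound.

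The inductive step would proceed in the spirit of Lemma \ref{lem:rk-bd-depth}. Decompose $C = C_1 + \cdots + C_t$ with each $C_i = C_{i,1} \cdots C_{i,t_i}$, and classify each sub-formula $C_i$ as type 1 if some factor has degree at least $d/2$ and as type 2 otherwise. In type 1, one recursively applies the inductive hypothesis to the large factor (whose degree is in $[d/2, d-1]$, strictly less than $d$) and combines it with the imbalance bound on the remaining factors to absorb the constant-sized loss in $\log d$. In type 2, every factor has degree less than $d/2$, so $t_i \geq 3$, and one bounds $\rk_w(C_i) \leq 2^{-\frac{1}{2}\sum_j |w_{S_{i,j}}|}$ via multiplicativity and imbalance (Claim \ref{clm:rk-props}), using an analog of Claim \ref{clm:sum-lb} to argue that $\sum_j |w_{S_{i,j}}| \geq \Omega(k \log d)$ with high probability. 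Since the degree is cut by at least a factor of two at each type 1 step, the recursion terminates after $O(\log d)$ levels and the $\log d$ factor in the exponent accumulates as desired.

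The main obstacle I anticipate is obtaining the sharp failure probability $s \cdot d^{-\Omega(\log d)}$: in the type 2 case, with as few as three balanced parts, the standard anti-concentration bound underlying Claim \ref{clm:sum-lb} only yields a failure probability of $d^{-O(1)}$. To address this, I expect the argument will have to unfold the formula along a balanced parse tree, producing $\Omega(\log d)$ sibling factors on each root-to-leaf parse rather than just at the top product gate. This ensures that each summand of the resulting decomposition is a product with $\Omega(\log d)$ balanced factors whose individual $|w_{S_{i,j}}|$'s sum to $\Omega(k \log d)$ with the required sharp probability (via Chernoff-type concentration over the $\log d$ parts), while paying only a multiplicative $s^{O(1)}$ overhead in the number of summands.
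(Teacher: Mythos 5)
Your final fix is precisely the paper's approach: the ``unfolding along a balanced parse tree'' you describe is the Product Lemma (Lemma~\ref{lem:prod-ub-depth}, quoted from \cite{Saptarishi-survey}), which expresses a set-multilinear formula with $s$ leaves as $\sum_{i=1}^s \prod_{j=1}^\ell F_{i,j}$ with $\ell \geq \log_3 d$ factors of geometrically decaying degree, and the paper applies it directly rather than running an induction on degree. You correctly diagnosed the obstacle with the naive type-1/type-2 induction modeled on Lemma~\ref{lem:rk-bd-depth} --- a single product gate may split the degree into only a constant number of pieces, so per-summand anti-concentration gives only $d^{-O(1)}$ --- and the remedy you propose (ensuring every summand is a product of $\Omega(\log d)$ factors, at only a polynomial blow-up in the number of summands) is exactly what the Product Lemma delivers. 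One small imprecision worth flagging: the factors are not ``balanced'' in the sense of roughly equal degree; they decay geometrically, and what the anti-concentration argument (Claim~\ref{clm:sum-lb-easy}) actually exploits is that the first $3\ell/4$ of them have degree at least $d^{1/4}$, so each contributes $\Pr[w_{S_j}=0]\leq d^{-1/8}$ over disjoint (hence independent) coordinates, and a union bound over which $\ell/4$ of these large blocks vanish yields the required $d^{-\Omega(\log d)}$ failure probability.
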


\begin{proof}
We first need the following structural result, whose proof can be immediately extrapolated from \cite{Saptarishi-survey} (see Lemma 13.3), where it is shown for multilinear and homogeneous formulas.
\begin{lem}[Product Lemma]\label{lem:prod-ub-depth}
Assume that $F$ is a formula with at most $s$ leaves, and is set-multilinear with respect to the set partition $(X_1,\ldots,X_d)$.
Then, we can write
\[
F = \sum_{i = 1}^s \prod_{j=1}^\ell F_{i,j}
\]
where $\ell \geq \log_3 d$ and for each $i\in [s]$, the product $F_i = \prod_{j=1}^\ell F_{i,j}$ is also set-multilinear. Furthermore, the degrees of $F_{i,j}$ satisfy the following geometric decay property:
\[
\left(\frac{1}{3}\right)^j d \leq \deg(F_{i,j})\leq \left(\frac{2}{3}\right)^j d, \text{ and } \deg(F_{i,\ell}) = 1.
\]
\end{lem}

\begin{lem}\label{lem:rk-gen-depth}
Let $F$ be a set-multilinear formula of size at most $s$ which computes a polynomial that is set-multilinear with respect to the partition $(X_1,\ldots, X_d)$ where each $|X_i| = n$. Let $w \in \{k,-k\}^d$ be chosen uniformly at random. Then, we have
\[
\rk_w(C)\leq  s \cdot 2^{-\frac{k\log d}{20}}
\]
with probability at least $1 - s\cdot d^{-\frac{\log d}{60}}$.
\end{lem}
\begin{proof}
We begin by writing $F$ in the form that is given by Lemma \ref{lem:prod-ub-depth}. Now, because of the geometric decay of the degrees of $F_{i,j}$, we observe that for each $i\in [s]$, at least for the first $\frac{3\ell}{4}$ many values of $j$, $\deg(F_{i,j})\geq d^{1/4}$. In other words, at least a \emph{constant} fraction of the $F_{i,j}$s have their degrees at least \emph{polynomially large} in $d$. This observation will be instrumental in establishing the following claim, which is akin to Claim \ref{clm:sum-lb} used in the proof of Lemma \ref{lem:rk-bd-depth}.

\begin{claim}\label{clm:sum-lb-easy}
For large enough $d$, suppose $(S_1, \ldots, S_{\ell})$ is a partition of $[d]$ such that $\left(\frac{1}{3}\right)^j d \leq |S_j|\leq \left(\frac{2}{3}\right)^j d$ for all $j\in[\ell]$, and  $|S_\ell| = 1$. Then, we have
\[
\Pr_{w\sim \{-1,1\}^d}\left[\sum_{j=1}^{\ell}|w_{S_j}| < \frac{\log d}{10}\right] \leq d^{-\frac{\log d}{60}}.
\]
\end{claim}
\begin{proof}
Consider the given event that $\frac{\log d}{10}$ exceeds the sum $\sum_{j=1}^{\ell}|w_{S_j}|$. As $\ell \geq \frac{\log d}{\log 3} > \frac{5\log d}{8}$, it follows that for at least half of the sets $S_j$, $w_{S_j} = 0$ (since $\frac{5}{8}-\frac{1}{10} = \frac{21}{40}>\frac12$). By the observation above, it also follows that at least for $\frac{\ell}{4}$ many of the \emph{first} $\frac{3\ell}{4}$ values of $j$, $w_{S_j} = 0$. But for a fixed such $j$, since $|S_j|\geq d^{1/4}$, the probability 
\[
\Pr_{w\sim \{-1,1\}^d}\left[w_{S_j} = 0\right]\leq \sqrt{\frac{2}{\pi |S_j|}}< \frac{1}{\sqrt{|S_j|}} \leq \frac{1}{d^{1/8}},
\]
Therefore, the probability that this happens for $\ell/4$ distinct $j$ amongst the first $\frac{3\ell}{4}$ values of $j$ is bounded by
\[\binom{3\ell/4}{\ell/4} \cdot \left(\frac{1}{d^{1/8}}\right)^\frac{\ell}{4}< 2^{3\ell/4}\cdot \left(\frac{1}{d^{1/8}}\right)^\frac{\ell}{4} < \left(\frac{2}{d^{1/32}}\right)^\ell < d^{-\frac{\log d}{60}}.\]
\end{proof}

By sub-additivity of $\rk_w$ (Claim~\ref{clm:rk-props}), we have
\begin{equation}\label{eqn:subadd}
    \rk_w(F)\leq \rk_w(F_1)+\cdots + \rk_w(F_s).
\end{equation}

So, fix an $i\in [s]$. As the formula is set-multilinear, let $(S_1, \ldots, S_{\ell})$ be the partition of $[d]$
such that each $F_{i,j}$ is set-multilinear with respect to $(X_t)_{t\in S_j}$. Let $w^{i,1},\ldots, w^{i,\ell}$ be the corresponding decomposition, whose respective sums are denoted by $w_{S_1},\ldots,w_{S_{\ell}}$. Then, by Claim \ref{clm:sum-lb-easy},
\[
\rk_w(F_i) = \prod_{j=1}^{\ell} \rk_{w^{i,j}}(F_{i,j}) \leq \prod_{j=1}^{\ell} 2^{-\frac12 |w_{S_j}|} = 2^{-\frac12\sum_{j=1}^{\ell}|w_{S_j}|}\leq 2^{-\frac{k\log d}{20}}
\]
with probability at least $1- d^{-\frac{\log d}{60}}$. Therefore, by a union bound over $i\in[s]$ and (\ref{eqn:subadd}), we conclude that \[
\rk_w(F)\leq  s \cdot 2^{-\frac{k\log d}{20}}
\]
with probability at least $1 - s\cdot d^{-\frac{\log d}{60}}$.
\end{proof}

Returning to the proof of the theorem, let $F$ be a set-multilinear formula of size $s$ computing $NW_{n,d}$. Suppose $s < d^{\frac{\log d}{120}}$. Then, by Lemma \ref{lem:rk-gen-depth}, with probability at least $1 - d^{-\frac{\log d}{120}}$,
\[
\rk_w(F)\leq  s \cdot 2^{-\frac{k{\log d}}{20}}.
\] 
But now, we can condition on the event that $w_{[d]} = 0$ (which occurs with probability $\Theta(\frac{1}{\sqrt{d}}$)) to establish the existence of a word $w\in \{-k,k\}^d$ with $w_{[d]} = 0$ such that $w$ satisfies $
\rk_w(F)\leq  s \cdot 2^{-\frac{k{\log d}}{20}}$. This is because of the trivial asymptotic bound $\frac{1}{\sqrt{d}} \gg d^{-\frac{\log d}{120}}$. Therefore, again by Claim \ref{clm:NW-full-rk},
\[
s\geq 2^{\frac{k{\log d}}{20}}\cdot \rk_w(F) = n^{\frac{\log d}{20}}
\]
which contradicts the assumption that $s < d^{\frac{\log d}{120}}$. Thus, we conclude that $s\geq d^{\frac{\log d}{120}} = d^{\Omega(\log d)}$.
\end{proof}

\section{Discussion and Open Problems}\label{sec:open}

We conclude by mentioning some interesting directions for future work.

\begin{itemize}

\item The most interesting and natural question is to make the hard polynomial in our main result IMM$_{n,n}$. This would imply super-polynomial algebraic formula lower bounds.  As far as we know, it is conceivable that our complexity measure could be used to prove the lower bound for the IMM$_{n,n}$ polynomial. While the relative rank of IMM$_{n,n}$ itself is low, there might be a suitable ``restriction" of it such that for a randomly chosen $w\in \{-k,k\}^n$, with reasonably high probability the restriction has large rank. This could then be used to prove the lower bound for IMM$_{n,n}$ (using Lemma \ref{lem:rk-bd-depth} or Lemma \ref{lem:rk-gen-depth}). The result from~\cite{LST1} also showed its lower bound for the IMM polynomial by first analyzing a suitable restriction of IMM (although unfortunately that very same restriction idea does not work for us; please see the discussion in the appendix). Perhaps an intermediate question is to make the hard polynomial computationally simpler, for instance to find any hard polynomial that lies in VP.

\item Another interesting question is to prove an improved depth hierarchy theorem for constant-depth set-multilinear formulas. \cite{LST1} shows a depth hierarchy theorem for low-depth set-multilinear formulas. However, since their lower bounds only hold for small degrees, the depth hierarchy theorem in~\cite{LST1} only gives a quasi-polynomial separation of successive product-depths. It would be very interesting to obtain exponential separations (which for instance have been shown for low-depth multilinear circuits in \cite{ChillaraEL018}) using our measure.

\item Another interesting direction could be to obtain lower bounds for general set-multilinear circuits via improved depth reduction results. 
The work of Kumar, Oliveira, and Saptharishi (\cite{KOS19}) provides some insight in this context, which shows an improved depth reduction to product-depth $\Delta$ with a size blow-up of $N^{O(\Delta\cdot (N/\log N)^{1/\Delta})}$ for {multilinear} circuits (regardless of degree). If a similar improvement (or any asymptotic improvement in the exponent) on the bound for general circuits from \cite{Tavenas15} could be shown to hold for set-multilinear circuits in the setting of Theorem \ref{thm-intro:main-bd-depth} or Theorem \ref{thm:main-bd-depth} (i.e., when $N\geq d^2$), then combined with our lower bounds, this would imply super-polynomial set-multilinear circuit lower bounds. We should note that \cite{FournierLMS15} rules out the possibility of obtaining a stronger reduction to depth-4, or $\Sigma\Pi\Sigma\Pi$ circuits, as it shows an $n^{\Omega(\sqrt{n})}$ size lower bound for set-multilinear depth-4 circuits computing IMM$_{n,n}$, which of course has small polynomial-sized set-multilinear circuits. Nevertheless, there is still the possibility of obtaining improved depth reduction statements for product-depths 2 (which as noted earlier, is $\Sigma\Pi\Sigma\Pi\Sigma$ and hence more general than depth-4) or higher, and combining it with our Theorem \ref{thm-intro:main-bd-depth} to obtain unbounded-depth set-multilinear circuit lower bounds. \cite{KumarS16} shows a quasi-polynomial separation between the strength of homogeneous $\Sigma\Pi\Sigma\Pi$ and $\Sigma\Pi\Sigma\Pi\Sigma$ circuits, which could be considered as some evidence towards the validity of this possibility.

\end{itemize}

\section*{Acknowledgments}
We would like to thank Swastik Kopparty, Mrinal Kumar, and Ben Rossman for several helpful discussions.

\bibliographystyle{alpha}
\bibliography{bibfile}

\begin{thebibliography}{DMPY12}

\bibitem[Ajt83]{Ajtai83}
Mikl{\'{o}}s Ajtai.
\newblock {\(\sum\)}\({}^{\mbox{1}}\)\({}_{\mbox{1}}\)-formulae on finite
  structures.
\newblock {\em Ann. Pure Appl. Log.}, 24(1):1--48, 1983.

\bibitem[AV08]{AgrawalV08}
Manindra Agrawal and V.~Vinay.
\newblock Arithmetic circuits: {A} chasm at depth four.
\newblock In {\em 49th Annual {IEEE} Symposium on Foundations of Computer
  Science, {FOCS} 2008, October 25-28, 2008, Philadelphia, PA, {USA}}, pages
  67--75. {IEEE} Computer Society, 2008.

\bibitem[B{\"{u}}r00]{burg}
Peter B{\"{u}}rgisser.
\newblock Cook's versus valiant's hypothesis.
\newblock {\em Theor. Comput. Sci.}, 235(1):71--88, 2000.

\bibitem[CELS18]{ChillaraEL018}
Suryajith Chillara, Christian Engels, Nutan Limaye, and Srikanth Srinivasan.
\newblock A near-optimal depth-hierarchy theorem for small-depth multilinear
  circuits.
\newblock In Mikkel Thorup, editor, {\em 59th {IEEE} Annual Symposium on
  Foundations of Computer Science, {FOCS} 2018, Paris, France, October 7-9,
  2018}, pages 934--945. {IEEE} Computer Society, 2018.

\bibitem[CLS19]{ChillaraL019}
Suryajith Chillara, Nutan Limaye, and Srikanth Srinivasan.
\newblock Small-depth multilinear formula lower bounds for iterated matrix
  multiplication with applications.
\newblock {\em {SIAM} J. Comput.}, 48(1):70--92, 2019.

\bibitem[DMPY12]{DvirMPY12}
Zeev Dvir, Guillaume Malod, Sylvain Perifel, and Amir Yehudayoff.
\newblock Separating multilinear branching programs and formulas.
\newblock In Howard~J. Karloff and Toniann Pitassi, editors, {\em Proceedings
  of the 44th Symposium on Theory of Computing Conference, {STOC} 2012, New
  York, NY, USA, May 19 - 22, 2012}, pages 615--624. {ACM}, 2012.

\bibitem[FLMS15]{FournierLMS15}
Herv{\'{e}} Fournier, Nutan Limaye, Guillaume Malod, and Srikanth Srinivasan.
\newblock Lower bounds for depth-4 formulas computing iterated matrix
  multiplication.
\newblock {\em {SIAM} J. Comput.}, 44(5):1173--1201, 2015.

\bibitem[FSS84]{FurstSS84}
Merrick~L. Furst, James~B. Saxe, and Michael Sipser.
\newblock Parity, circuits, and the polynomial-time hierarchy.
\newblock {\em Math. Syst. Theory}, 17(1):13--27, 1984.

\bibitem[H{\aa}s86]{Hastad86}
Johan H{\aa}stad.
\newblock Almost optimal lower bounds for small depth circuits.
\newblock In Juris Hartmanis, editor, {\em Proceedings of the 18th Annual {ACM}
  Symposium on Theory of Computing, May 28-30, 1986, Berkeley, California,
  {USA}}, pages 6--20. {ACM}, 1986.

\bibitem[HY11]{HrubesY11}
Pavel Hrubes and Amir Yehudayoff.
\newblock Homogeneous formulas and symmetric polynomials.
\newblock {\em Comput. Complex.}, 20(3):559--578, 2011.

\bibitem[KLSS17]{KayalLSS17}
Neeraj Kayal, Nutan Limaye, Chandan Saha, and Srikanth Srinivasan.
\newblock An exponential lower bound for homogeneous depth four arithmetic
  formulas.
\newblock {\em {SIAM} J. Comput.}, 46(1):307--335, 2017.

\bibitem[KNS20]{KayalNS20}
Neeraj Kayal, Vineet Nair, and Chandan Saha.
\newblock Separation between read-once oblivious algebraic branching programs
  (roabps) and multilinear depth-three circuits.
\newblock {\em {ACM} Trans. Comput. Theory}, 12(1):2:1--2:27, 2020.

\bibitem[Koi12]{Koiran12}
Pascal Koiran.
\newblock Arithmetic circuits: The chasm at depth four gets wider.
\newblock {\em Theor. Comput. Sci.}, 448:56--65, 2012.

\bibitem[KOS19]{KOS19}
Mrinal Kumar, Rafael Oliveira, and Ramprasad Saptharishi.
\newblock Towards optimal depth reductions for syntactically multilinear
  circuits.
\newblock In Christel Baier, Ioannis Chatzigiannakis, Paola Flocchini, and
  Stefano Leonardi, editors, {\em 46th International Colloquium on Automata,
  Languages, and Programming, {ICALP} 2019, July 9-12, 2019, Patras, Greece},
  volume 132 of {\em LIPIcs}, pages 78:1--78:15. Schloss Dagstuhl -
  Leibniz-Zentrum f{\"{u}}r Informatik, 2019.

\bibitem[KS16]{KumarS16}
Mrinal Kumar and Ramprasad Saptharishi.
\newblock Finer separations between shallow arithmetic circuits.
\newblock In Akash Lal, S.~Akshay, Saket Saurabh, and Sandeep Sen, editors,
  {\em 36th {IARCS} Annual Conference on Foundations of Software Technology and
  Theoretical Computer Science, {FSTTCS} 2016, December 13-15, 2016, Chennai,
  India}, volume~65 of {\em LIPIcs}, pages 38:1--38:12. Schloss Dagstuhl -
  Leibniz-Zentrum f{\"{u}}r Informatik, 2016.

\bibitem[KS17]{/KumarS17}
Mrinal Kumar and Shubhangi Saraf.
\newblock On the power of homogeneous depth 4 arithmetic circuits.
\newblock {\em {SIAM} J. Comput.}, 46(1):336--387, 2017.

\bibitem[KSS14]{KayalSS14}
Neeraj Kayal, Chandan Saha, and Ramprasad Saptharishi.
\newblock A super-polynomial lower bound for regular arithmetic formulas.
\newblock In David~B. Shmoys, editor, {\em Symposium on Theory of Computing,
  {STOC} 2014, New York, NY, USA, May 31 - June 03, 2014}, pages 146--153.
  {ACM}, 2014.

\bibitem[KST16]{KayalST16}
Neeraj Kayal, Chandan Saha, and S{\'{e}}bastien Tavenas.
\newblock An almost cubic lower bound for depth three arithmetic circuits.
\newblock In Ioannis Chatzigiannakis, Michael Mitzenmacher, Yuval Rabani, and
  Davide Sangiorgi, editors, {\em 43rd International Colloquium on Automata,
  Languages, and Programming, {ICALP} 2016, July 11-15, 2016, Rome, Italy},
  volume~55 of {\em LIPIcs}, pages 33:1--33:15. Schloss Dagstuhl -
  Leibniz-Zentrum f{\"{u}}r Informatik, 2016.

\bibitem[KST18]{Kayal0T18}
Neeraj Kayal, Chandan Saha, and S{\'{e}}bastien Tavenas.
\newblock On the size of homogeneous and of depth-four formulas with low
  individual degree.
\newblock {\em Theory Comput.}, 14(1):1--46, 2018.

\bibitem[LST21]{LST1}
Nutan Limaye, Srikanth Srinivasan, and S{\'{e}}bastien Tavenas.
\newblock Superpolynomial lower bounds against low-depth algebraic circuits.
\newblock In {\em 62nd {IEEE} Annual Symposium on Foundations of Computer
  Science, {FOCS} 2021, Denver, CO, USA, February 7-10, 2022}, pages 804--814.
  {IEEE}, 2021.

\bibitem[LST22]{LST2}
Nutan Limaye, Srikanth Srinivasan, and S{\'{e}}bastien Tavenas.
\newblock Set-multilinear and non-commutative formula lower bounds for iterated
  matrix multiplication.
\newblock {\em To appear in STOC}, 2022.

\bibitem[NW94]{NisanW94-hardness}
Noam Nisan and Avi Wigderson.
\newblock Hardness vs randomness.
\newblock {\em J. Comput. Syst. Sci.}, 49(2):149--167, 1994.

\bibitem[NW97]{NisanW97}
Noam Nisan and Avi Wigderson.
\newblock Lower bounds on arithmetic circuits via partial derivatives.
\newblock {\em Comput. Complex.}, 6(3):217--234, 1997.

\bibitem[Raz87]{Razborov1987LowerBO}
Alexander~A. Razborov.
\newblock Lower bounds on the size of bounded depth circuits over a complete
  basis with logical addition.
\newblock {\em Mathematical notes of the Academy of Sciences of the USSR},
  41:333--338, 1987.

\bibitem[Raz09]{Raz09}
Ran Raz.
\newblock Multi-linear formulas for permanent and determinant are of
  super-polynomial size.
\newblock {\em J. {ACM}}, 56(2):8:1--8:17, 2009.

\bibitem[Raz13]{Raz-Tensor}
Ran Raz.
\newblock Tensor-rank and lower bounds for arithmetic formulas.
\newblock {\em J. {ACM}}, 60(6):40:1--40:15, 2013.

\bibitem[RY09]{RazY09}
Ran Raz and Amir Yehudayoff.
\newblock Lower bounds and separations for constant depth multilinear circuits.
\newblock {\em Comput. Complex.}, 18(2):171--207, 2009.

\bibitem[Sap15]{Saptarishi-survey}
Ramprasad Saptharishi.
\newblock A survey of lower bounds in arithmetic circuit complexity.
\newblock {\em Github Survey}, 2015.

\bibitem[Smo87]{Smolensky87}
Roman Smolensky.
\newblock Algebraic methods in the theory of lower bounds for boolean circuit
  complexity.
\newblock In Alfred~V. Aho, editor, {\em Proceedings of the 19th Annual {ACM}
  Symposium on Theory of Computing, 1987, New York, New York, {USA}}, pages
  77--82. {ACM}, 1987.

\bibitem[Tav15]{Tavenas15}
S{\'{e}}bastien Tavenas.
\newblock Improved bounds for reduction to depth 4 and depth 3.
\newblock {\em Inf. Comput.}, 240:2--11, 2015.

\bibitem[Yao85]{Yao85}
Andrew~Chi{-}Chih Yao.
\newblock Separating the polynomial-time hierarchy by oracles (preliminary
  version).
\newblock In {\em 26th Annual Symposium on Foundations of Computer Science,
  Portland, Oregon, USA, 21-23 October 1985}, pages 1--10. {IEEE} Computer
  Society, 1985.

\end{thebibliography}

\newpage 

\appendix

\section{Word Polynomials from \cite{LST1,LST2} and Our Measure}

Both \cite{LST1,LST2} show their set-multilinear formula lower bounds for IMM$_{n,d}$ by showing that small enough set-multilinear formulas have low relative rank and that a certain ``restriction'' of IMM$_{n,d}$ has large relative rank. This restriction possesses the desirable property that if there is a small low-depth set-multilinear circuit computing IMM$_{n,d}$, then there is one for this restriction as well. It is then natural to wonder if we can use these same restrictions for our \emph{symmetric} measure and deduce strong lower bounds for IMM (in order to show super-polynomial general formula lower bounds as discussed), in addition to obtaining them for the NW polynomial. Unfortunately, it is straightforward to show that this is not possible, as we shall now see.

\begin{definition}[Word Polynomials of \cite{LST1,LST2}]
Let $w\in \R^d$ be any word with non-zero entries. Say $X(w) = (X_1,\ldots, X_d)$ where each $X_i$ has size $2^{|w_i|}$; we assume that the variables of $X_i$ are labelled
by strings in $\{0, 1\}^{|w_i|}$.

Given any monomial $m\in \Fsm[\ol{X}(w)]$, let $m_+$ denote the corresponding ``positive" monomial
from $\calM^\calP_w$ and $m_-$ the corresponding ``negative" monomial from $\calM^\calN_w$. As each variable of $\ol{X}(w)$
is labelled by a Boolean string and each set-multilinear monomial over any subset of $\ol{X}(w)$ is
associated with a string of variables, we can associate any such monomial $m'$ with a Boolean
string $\sigma(m')$. More precisely, if $j_1<\cdots<j_t$ and $m'= x_{\sigma_1}^{(j_1)}x_{\sigma_1}^{(j_1)} \ldots x_{\sigma_t}^{(j_t)}$ with $x_{\sigma_i}^{(j_i)}\in X_{j_i}$
and $\sigma_i\in \{0,1\}^{|w_{j_i}|}$ for each $i\in [t]$, then $\sigma(m')$ is defined to be $\sigma_1\cdots\sigma_t$. 
We will write $\sigma(m_+)\sim \sigma(m_-)$ when the shorter one is a prefix of the other one. The polynomial $P_w$ is defined as follows
\[
P_w  = \sum_{\substack{m\in \F[\ol{X}(w)],\\ \sigma(m_+)\sim \sigma(m_-)}} m.
\]
\end{definition}

Clearly, the matrices $\calM_w(P_w)$ are full-rank (i.e., have rank equal to either the number of rows
or the number of columns, whichever is smaller). So, $\rk_w(P_w) = 2^{-|w_{[d]}|/2}$.

In our measure, $w\in \{k,-k\}^d$ with $w_{[d]} = 0$ i.e., there is an equal number of positive and negative variable sets and an equal number of variables $n = 2^k$ in each set. Thus, in the sum above, $\sigma(m_+)\sim \sigma(m_-)$ gets replaced with $\sigma(m_+)= \sigma(m_-)$. The sum is indexed over all Boolean strings of length $kd/2$, and so there are $n^{d/2}$ terms in all. Moreover, there is a canonical bijection between the positive and negative variables: since $|\calP_w| = |\calN_w| = d/2$, if an element $j\in \calP_w$ is the $k$-th largest element in $\calP_w$, it corresponds to the $k$-th largest element $j'$ in $\calN_w$ such that $x_{i,j}$ appears in a monomial of $P_w$ if and only if so does $x_{i,j'}$. Let $\phi:\calP_w \rightarrow \calN_w$ denote this correspondence. Then, we see that 
\[
P_w = \prod_{j\in\calP_w} \sum_{i=1}^n x_{i,j}\cdot x_{i,\phi(j)},
\]
implying that $P_w$ actually has small depth-3 set-multilinear formulas.
\end{document}